\newcommand{\be}{\begin{equation}}
\newcommand{\ee}{\end{equation}}
\newcommand{\ba}{\begin{eqnarray}}
\newcommand{\ea}{\end{eqnarray}}
\newcommand{\ban}{\begin{eqnarray*}}
\newcommand{\ean}{\end{eqnarray*}}
\newcommand*{\EC}{\mathrm{EC}}
\newcommand*{\leak}{\mathrm{leak}}
\newcommand*{\eps}{\varepsilon}
\newcommand*{\epsb}{\bar{\eps}}
\newcommand{\demi}{\frac{1}{2}}
\newcommand{\one}{\leavevmode\hbox{\small1\normalsize\kern-.33em1}}
\newcommand*{\id}{\mathrm{id}}
\newcommand*{\cE}{\mathcal{E}}
\newcommand*{\cF}{\mathcal{F}}
\newcommand*{\cX}{\mathcal{X}}
\newcommand*{\cZ}{\mathcal{Z}}
\newcommand*{\GF}{\mathrm{GF}}
\begin{document}

\title{Security Bounds for Quantum Cryptography with Finite Resources} \author{Valerio
  Scarani$^1$ and Renato Renner$^2$} \institute{$^1$ Centre for
  Quantum Technologies and Department of Physics, National University of Singapore, Singapore\\
  $^2$ Institute for Theoretical Physics, ETH Zurich, Switzerland}
\date{\today}

\maketitle

\begin{abstract}
  A practical quantum key distribution (QKD) protocol necessarily runs
  in finite time and, hence, only a finite amount of communication is
  exchanged.  This is in contrast to most of the standard results on
  the security of QKD, which only hold in the limit where the number
  of transmitted signals approaches infinity. Here, we analyze the
  security of QKD under the realistic assumption that the amount of
  communication is finite. At the level of the general formalism, we
  present new results that help simplifying the actual implementation
  of QKD protocols: in particular, we show that symmetrization steps,
  which are required by certain security proofs (e.g., proofs based on
  de Finetti's representation theorem), can be omitted in practical
  implementations. Also, we demonstrate how two-way reconciliation
  protocols can be taken into account in the security analysis. At the
  level of numerical estimates, we present the bounds with finite
  resources for ``device-independent security'' against collective
  attacks.
\end{abstract}

\section{Introduction}

Quantum key distribution (QKD) is one of the most mature fields of quantum information science, both from the theoretical and the experimental point of view \cite{review1,review2,review3}. This does not mean, however, that the open questions are merely technical ones: in this paper, we are concerned with an issue that is in fact rather crucial for the assessment of security of real devices.

Most unconditional security proofs of QKD have provided an
asymptotic bound for the secret key rate $r$, valid only in the limit
of \textit{infinitely long
  keys}~\cite{ShoPre00,Mayers01,LoChau,Koashi,BenOr02}. This reads in
general~\cite{DevWin05} \ba r & =& S(X|E) - H(X|Y) \
, \label{eq:entropyrateasym} \ea where $S(X|E) := S(X E) - S(E)$ and
$H(X|Y) := H(X Y) - H(Y)$ are the conditional von Neumann and Shannon
entropies, respectively, evaluated for the joint state of Alice and
Bob's raw key and the system controlled by Eve (after the sifting
step).

In real experiments, obviously, \textit{finite resources} are used. As
a matter of fact, the need for finite key analysis was recognized
several years ago~\cite{ina07}. In early security proofs though, the
\textit{security parameter} \ba \mbox{``Deviation from the ideal
  case''}&\leq& \eps\,.  \ea was defined in terms of ``accessible
information''. This measure of deviation had two shortcomings, namely
(i) it does not provide composable security, as proved
in~\cite{KRBM07}, and (ii) it has no operational interpretation. It
turns out that both shortcomings are not problematic for asymptotic
bounds\footnote{The absence of an operational interpretation of $\eps$
  is not a problem since any deviation is supposed to vanish for
  asymptotically long keys. Furthermore, the fact that asymptotic
  bounds can be ``redeemed'' for composability is a consequence of the
  result of~\cite{RenKoe05} saying that keys obtained by two-universal
  hashing provide composable security.}, but for finite-key analysis a
different definition must be used. A correct definition was used for
the first time in~\cite{mey06}, but the authors considered only a
restricted class of attacks. While partial, these and other studies \cite{LoChauArdehali,Ma05,Wang05} triggered the awareness that a large $N$ would be required for a QKD experiment to produce a secure key.

More recently, Hayashi used a valid definition (although the concern
for composable security is not addressed explicitly) in his analysis
of the BB84 protocol with decoy states~\cite{hay2}. Hayashi's bound
has been applied to experimental data \cite{hase}. Apart from being
possibly the first creation of a truly unconditional secure key, this
experiment provides an instructive example of how critical finite key analysis is. Indeed, for the observed error rate $Q\approx 5\%$ and
the choice $\eps=2^{-9}$, $4100$ secret bits could be extracted from
each raw key block of $n\approx \frac{N}{2}=10^5$ bit: in other words,
the final secret key rate was $r\approx 2\%$, instead of the $r\approx
43\%$ predicted by the asymptotic bound. Security bounds for finite
resources are definitely one of the most urgent tasks for practical
QKD \cite{review3}.

Recently we have shown that the theoretical tools developed by one of
us \cite{rennerthesis} can be used to provide a compact
approach to security proofs in the non-asymptotic limit
\cite{scaren}. Our formalism leads to a generalized version of the
secret key rate that reads 
\ba r &=&(n/N)\,\left[
  S_\xi(X|E)-\Delta-\leak_{\EC}/n\right]\,.
\label{raten}\ea Comparing with (\ref{eq:entropyrateasym}), four
modifications should be noticed: (i) only a fraction $n$ of the
signals contributes to the key, the rest must be used for parameter
estimation; (ii) the parameter estimation has finite precision $\xi$;
(iii) the task of privacy amplification itself has a security
parameter $\Delta$; and (iv) the error correction protocol may not
reach the Shannon limit, so $\leak_{\EC}\geq n H(X|Y)$.

In this paper, we revisit our previous work and improve it by two
important observations (Lemmas \ref{lem:symm} and \ref{lem:leakage}
below), then we present a new example of explicit calculation (Section
\ref{ssdevindep}).

\section{Basic definitions}

\subsection{Definition of security}

In the existing literature on QKD, not only the analysis, but also the
very \emph{definition} of security is mostly limited to the asymptotic
case; and we therefore need to revisit it here. Most generally, the
security of a key $K$ can be parametrized by its \emph{deviation}
$\eps$ from a \emph{perfect key}, which is defined as a uniformly
distributed bit string whose value is completely independent of the
adversary's knowledge.  In an \emph{asymptotic} scenario, a key $K$ of
length $\ell$ is commonly said to be \emph{secure} if this deviation
$\eps$ tends to zero as $\ell$ increases.  In the
\emph{non-asymptotic} scenario studied here, however, the deviation
$\eps$ is always finite.  This makes it necessary to attribute an
\emph{operational interpretation} to the parameter $\eps$. Only then
is it possible to choose a meaningful security threshold (i.e., an
upper bound for $\eps$) reflecting the level of security we are aiming
at. Another practically relevant requirement that we need to take into
account is \emph{composability} of the security definition.
Composability guarantees that a key generated by a QKD protocol can
safely be used for applications, e.g., as a one-time-pad for
message encryption. Although this requirement is obviously crucial for
practice, it is not met by most security definitions considered in the
literature~\cite{KRBM07}.

Our results are formulated in terms of a security definition that meets both requirements, i.e., it
is composable and, in addition, the parameter $\eps$ has an
operational interpretation. The definition we use was proposed in~\cite{BHLMO05,RenKoe05}:
for any $\eps \geq 0$, a key $K$ is said to be \emph{$\eps$-secure with respect to an adversary $E$} if the joint state $\rho_{K E}$ satisfies
\begin{align} \label{eq:sec}
    \frac{1}{2} \bigl\| \rho_{K E} - \tau_K \otimes \rho_E \bigr\|_1 \leq \eps \ ,
\end{align}
where $\tau_K$ is the completely mixed state on $K$. The parameter
$\eps$ can be seen as the maximum probability that $K$ differs from a
perfect key (i.e., a fully random bit string)~\cite{RenKoe05}.
Equivalently, $\eps$ can be interpreted as the \emph{maximum failure
  probability}, where failure means that ``something went wrong'',
e.g., that an adversary might have gained some information on $K$.
From this perspective, it is also easy to understand why the
definition is composable. In fact, the failure probability of any
cryptosystem that uses a perfect secret key only increases by (at
most) $\eps$ if we replace the perfect key by an $\eps$-secure key. In
particular, because one-time pad encryption with a perfect key has
failure probability $0$ (the ciphertext gives zero information about
the message), it follows that one-time-pad encryption based on an
$\eps$-secure key remains perfectly confidential, except with
probability at most~$\eps$.

\subsection{Description of the Generic Protocol}

Although most practical quantum key distribution protocols are
\emph{prepare-and-measure} schemes, for analyzing their security it is
often more convenient to consider an \emph{entanglement-based}
formulation. In fact, such a formulation can be obtained by simply
replacing all classical randomness by quantum entanglement and
postponing all measurements. In the following, we describe the general
type of protocol our analysis applies to.

\begin{enumerate}
\item \emph{Distribution of quantum information:} Alice and Bob
  communicate over an (insecure) quantum channel to generate $N$
  identical and independent pairs of entangled particles.\footnote{We
    use the term \emph{particle} here only for concreteness. More
    generally, they might be arbitrary subsystems.}  The joint state
  of the $N$ particle pairs together with the information that an
  adversary might have on them (e.g., acquired by eavesdropping) is
  denoted by $\rho_{A^N B^N E^N}$.
\item \emph{Parameter estimation:} Alice and Bob apply a
  LOCC-measurement\footnote{A \emph{LOCC-measurement} is a measurement
    on a bipartite system that can be performed by local measurements
    on the subsystems combined with classical communication.} to $m$
  particle pairs selected at random (using the authentic communication
  channel). We denote the resulting statistics by $\lambda_m$ and the
  joint state of the remaining (not measured) particles and Eve's
  system by $\rho_{A^{N-m} B^{N-m} E^N}$. If the statistics
  $\lambda_m$ fails to satisfy certain criteria, Alice and Bob abort
  the protocol.
\item \emph{Measurement and advantage distillation:} Alice and Bob
  apply block-wise measurements $\cE_{A^b B^b}$ on their remaining
  particles to get raw keys $X^n$ and $Y^n$, respectively. More
  precisely, $\cE_{A^b B^b}$ is an arbitrary LOCC-measurement applied
  sequentially to blocks $A^b$ of $b$ particles on Alice's side and
  the corresponding particles $B^b$ on Bob's side. In a protocol
  without advantage distillation, $\cE_{A^b B^b} = \cE_A \otimes
  \cE_B$ simply consists of local measurements on single particles,
  i.e., $b=1$. However, $\cE_{A^b B^b}$ might describe any operation
  that can be performed by Alice and Bob on a finite block of particle
  pairs. The resulting state is then given by $\rho_{X^n Y^n E^N} =
  (\cE_{A^b B^b}^{\otimes n} \otimes \id_{E^N})(\rho_{X^{b n} Y^{b n}
    E^N})$, where $n$ is the number of blocks, i.e., $n b \leq N-m$.
\item \emph{Error correction:} Alice and Bob exchange classical
  messages, summarized by $C$, which allow Bob to compute a guess
  $\hat{X}^{b n}$ for Alice's string $X^{b n}$.
\item \emph{Privacy amplification:} Alice and Bob generate the final
  key by applying an appropriately chosen hash function to $X^{b n}$
  and $\hat{X}^{b n}$, respectively. The requirement on the hash
  function is that it maps strings with sufficiently high min-entropy
  to uniform strings of a certain length $\ell$ (such functions are
  sometimes called \emph{strong (quantum) extractors}). A typical (and
  currently the only known) class of functions satisfying this
  requirement are \emph{two-universal hash functions} (see
  Section~\ref{sec:hashing} for examples of two-universal function
  families).
\end{enumerate}

\section{Security analysis}

\subsection{Security against collective attacks}
\label{sscollective}

\newcommand*{\epsE}{\bar{\eps}}

An attack is said to be \emph{collective} if the interaction of Eve
with the quantum channel during the distribution step is i.i.d. This
implies that the state after the distribution step is i.i.d., too,
that is, $\rho_{A^N B^N E^N} = \sigma_{A B E}^{\otimes N}$, where
$\sigma_{A B E}$ is the density operator describing a single particle
pair together with the corresponding ancilla $E$ held by Eve.

The following analysis is subdivided into four parts. Each part gives
rise to separate errors, denoted by $\eps_{\mathrm{PE}}$, $\epsE$,
$\eps_{\mathrm{EC}}$, and $\eps_{\mathrm{PA}}$, respectively. These
sum up to
\begin{align} \label{eq:epssum}
  \eps = \eps_{\mathrm{PE}} + \epsE + \eps_{\mathrm{EC}} + \eps_{\mathrm{PA}} \ ,
\end{align}
where $\eps$ is the security of the final key (cf.~\eqref{eq:sec} for
the definition of security). Making the individual contributions
smaller comes at the cost of reducing other parameters that,
eventually, result in a reduction of the size of the final key (see
equations~\eqref{eq:Gamma}, \eqref{eq:iid}, \eqref{eq:EC}, and
\eqref{eq:PA}).

\begin{itemize}
\item{\textit{Parameter estimation (minimize set of compatible states
      $\Gamma$ and number of sample points~$m$ vs.\ minimize failure
      probability~$\eps_{\mathrm{PE}}$).}}

  Parameter estimation allows Alice and Bob to determine properties of
  $\sigma_{A B}$. We express this by defining a set
  $\Gamma_{\eps_{\mathrm{PE}}}$ containing all states $\sigma_{A B}$
  that are \emph{compatible} with the outcomes of the parameter
  estimation. For concreteness, we assume here that Alice and
  Bob|depending on the statistics of their measurements|either
  continue with the execution of the protocol or abort. The set
  $\Gamma_{\eps_{\mathrm{PE}}}$ is then defined as the set of states
  $\sigma_{A B}$ for which the protocol continues with probability at
  least $\eps_{\mathrm{PE}}$ (i.e., the states from which a key will
  be extracted with non-negligible probability). The quantity $\eps_{\mathrm{PE}}$ corresponds therefore to the probability that the parameter estimation passes although the raw key does not contain sufficient secret correlation.  
  In particular, if
  Alice and Bob continue the protocol whenever they observe a
  statistics $\lambda_m$ using a POVM with $d$ possible outcomes then
  (Lemma 3 of \cite{scaren})
\begin{align} \label{eq:Gamma}
  \Gamma_{\eps_{\mathrm{PE}}}
\subseteq
  \left\{\sigma_{A B}:\,\|\lambda_m-\lambda_{\infty}(\sigma_{A B})\|\leq {\textstyle \sqrt{ \frac{2 \ln(1/\eps_{\mathrm{PE}}) + d \, \ln(m+1)}{m}}}\right\} \ 
\end{align}
where $\lambda_{\infty}(\sigma_{A B})$ denotes the (perfect)
statistics in the limit of infinitely many measurements.

\item{\textit{Calculation of the min-entropy (minimize decrease of
  min-entropy~$\delta$ vs.\ minimize error probability~ $\epsE$).}}

Under the assumption of collective attacks, the joint state of Alice
and Bob's as well as the relevant part of Eve's system after the
measurement and advantage distillation step is of the form
$\rho_{X^{n} Y^{n} E^{b n}} = \sigma_{X Y E^b}^{\otimes n}$ where 
\begin{align} \label{eq:AD}
\sigma_{X Y E^b} 
:= 
({\cE_{A^b B^b} \otimes \id_{E^b}})(\sigma_{A B E}^{\otimes b})
\end{align}
This property allows to compute a lower bound on the smooth
min-entropy of $X^n$ given Eve's overall information $E^N$ (before
error correction), which will play a crucial role in the analysis of
the remaining part of the protocol. More precisely, the min-entropy
can be expressed in terms of the von Neumann entropy $S$ evaluated for
the state $\sigma_{X E^b}$,
\begin{align} \label{eq:iid}
  H_{\infty}^{\epsE}(X^n | E^N) \geq n (S(X|E^b)_{\sigma_{X E^b}} - \delta)
\end{align}
where $\delta := 7
\sqrt{\frac{\log_2(2/\epsE)}{n}}$.

\item{\textit{Error correction (information leakage~$\mathrm{leak}$ vs.\
  failure probability~ $\eps_{\mathrm{EC}}$).}}

Error correction necessarily involves communication $C$ between Alice
and Bob. The maximum leakage of information to an adversary is
expressed in terms of min- and max-entropies,
\begin{align*}
  \mathrm{leak} := H_0(C) - H_{\infty}(C|X^{n} Y^{n}) \ .
\end{align*}
While $H_0(C)$ corresponds to the total number of relevant bits
exchanged during error correction, we subtract $H_{\infty}(C|X^n Y^n)$
which is the number of bits that are \emph{independent} of the raw key
pair $(X^n, Y^n)$.  Note the formal resemblance of this expression to
the mutual information $I(C: X^n Y^n)$. Indeed, the quantity
$\mathrm{leak}$ counts the number of bits of $C$ that are
\emph{correlated} to the raw key. In particular, any information that
is independent of the raw key, such as the description of an error
correcting code, does not contribute. Also, in a protocol where
redundant messages are exchanged (this is for instance the case for
two-way error correction schemes such as the Cascade
protocol~\cite{cascade}), the quantity $\mathrm{leak}$ is generally
much smaller than the total number of communicated bits.

Typically, there is a trade-off between the leakage $\mathrm{leak}$
and the failure probability, i.e., the maximum probability that
$\hat{X} \neq X$ (where the maximum is taken over all possible states
in $\Gamma_{\eps_{\mathrm{PE}}}$), which we denote by
$\eps_{\mathrm{EC}}$. This trade-off depends strongly on the actual
error correction scheme that is employed, but typically has the form
\begin{align} \label{eq:leakec}
  \mathrm{leak}_{\eps_{\mathrm{EC}}} = f H_0(X|Y) + \log_2\frac{2}{\eps_{\mathrm{EC}}} 
\end{align}
where $f$ is a constant larger than $1$. In theory, there are error
correction schemes with $f$ arbitrarily close to $1$, but the decoding
is usually not feasible due to computational limitations. In practice,
$f\approx 1.05-1.2$.

\item{\textit{Privacy amplification (maximize final key length~$\ell$
  vs.\ minimize failure probability~$\eps_{\mathrm{PA}}$).}}

To evaluate the final key size, we need to bound the decrease of min-entropy after the leakage of information that occurred in error correction. It follows from Lemma~\ref{lem:leakage} below
that the smooth min-entropy of $X^n$ given Eve's
information after error correction is bounded by
\begin{align} \label{eq:EC}
  H_{\infty}^{\epsE}(X^n | E^N C) \geq H_{\infty}^{\epsE}(X^n | E^N)
  - \mathrm{leak}_{\eps_{\mathrm{EC}}} \ .
\end{align}
The security of the final key only depends on this quantity and the
efficiency of the hash function used for privacy amplification. More
precisely, if two-universal hashing\footnote{Two-universal hashing is
  the procedure normally used for privacy amplification.} is used
then, for any fixed $\eps_{\mathrm{PA}} > 0$, the maximum length
$\ell$ of the final key is bounded by
\begin{align} \label{eq:PA}
  \ell \leq H^{\epsE}(X^n | E^N C) - 2 \log_2 \frac{1}{\eps_{\mathrm{PA}}} \
  .
\end{align}

\end{itemize}

Combining~\eqref{eq:iid}, \eqref{eq:EC} and~\eqref{eq:PA}, we
conclude that the final key is $\eps$-secure, for $\eps =
\eps_{\mathrm{PE}} + \epsE + \eps_{\mathrm{EC}} + \eps_{\mathrm{PA}}$
as in~\eqref{eq:epssum}, if
\begin{align} \label{eq:ellcrit} \ell \leq n \left[\min_{\sigma_{A B
      E} \in \Gamma_{\eps_{\mathrm{PE}}}} S(X|E^b)_{\sigma_{X E ^b}} - \delta(\epsb)\right] - \mathrm{leak}_{\eps_{\mathrm{EC}}} - 2
  \log_2\frac{1}{\eps_{\mathrm{PA}}}
\end{align}
where $\sigma_{X E^b}$ is related to $\sigma_{A B}$ via~\eqref{eq:AD}
applied to a purification of $\sigma_{A B}$ and where $\delta(\epsE)=
7 \sqrt{\frac{\log_2(2/\epsE)}{n}}$.

\subsection{Security analysis against general attacks}

A general method to turn a proof against collective attacks into a
proof against the most general coherent attacks is to introduce
additional symmetries. Here we highlight two aspects
that have been dealt with only partially in previous works.

\paragraph{A Lemma on symmetrization.}

The following lemma states that the smooth min-entropy of the state
before the symmetry operations have been applied is lower bounded by
the smooth min-entropy of the symmetrized state.

\begin{lemma}\label{lem:symm}
  Let $\rho_{X E}$ be a cq-state and let $\{f_R\}$ be a family of
  functions on $X$. Then, for any $\eps \geq 0$ and $R$ chosen at
  random
  \begin{align*}
    H_{\infty}^{\eps}(X|E) \geq H_{\infty}^{\eps}(f_R(X) | E R) \ .
  \end{align*}
\end{lemma}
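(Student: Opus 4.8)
The plan is to reduce the claim to two elementary facts about smooth min-entropy: that enlarging the conditioning system by an \emph{independent} register can only lower the entropy, and that applying a publicly announced function to a classical register cannot increase it. Since $R$ is chosen at random independently of $\rho_{X E}$, the relevant joint state is $\rho_{X E} \otimes \omega_R$, where $\omega_R$ encodes the distribution of $R$. Discarding $R$ from Eve's side information is a trace-preserving operation on the conditioning system, under which smooth min-entropy is monotone (a standard strong-subadditivity property), so $H_{\infty}^{\eps}(X|E) \geq H_{\infty}^{\eps}(X|ER)$. It therefore suffices to establish $H_{\infty}^{\eps}(X|ER) \geq H_{\infty}^{\eps}(f_R(X)|ER)$, i.e.\ that the classical ``relabeling'' channel $\cF$ which, controlled on $R=r$, replaces $X$ by $f_r(X)$ while keeping $R$ intact, does not raise the smooth min-entropy of the first register.

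The non-smoothed version of this is immediate from the operational (guessing-probability) characterization of the min-entropy: for any cq-state, $H_{\infty}(A|B) = -\log_2 p_{\mathrm{guess}}(A|B)$ when $A$ is classical. Given any strategy that guesses $X$ from $(E,R)$, one obtains a guess for $f_R(X)$ by applying the function $f_R$, which is known because $R$ lies in the conditioning; every correct guess of $X$ yields a correct guess of $f_R(X)$. Hence $p_{\mathrm{guess}}(f_R(X)|ER) \geq p_{\mathrm{guess}}(X|ER)$, giving $H_{\infty}(f_R(X)|ER) \leq H_{\infty}(X|ER)$ for \emph{every} such state.

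The main work is transporting this inequality through the smoothing. I cannot invoke a generic data-processing inequality for the \emph{first} register, because applying a channel there has no fixed monotonicity direction for min-entropy; one must exploit the classical structure of $\cF$. Concretely, let $\bar{\sigma}$ be an optimal smoothing state for the right-hand side, $\bar{\sigma} \in \mathcal{B}^{\eps}(\rho_{f_R(X) E R})$ with $H_{\infty}(f_R(X)|ER)_{\bar{\sigma}} = H_{\infty}^{\eps}(f_R(X)|ER)$, taken (as one may for a cq-state) to be classical on the relabeled register and on $R$. The relabeling admits a stochastic inverse $\cR$ relative to the true input: conditioned on $R=r$ and on the value $z$, the map $\cR$ re-prepares $X$ according to the conditional distribution $P_{X \mid f_r(X)=z}$ computed from $\rho$. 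By construction $\cF \circ \cR = \id$ on the relabeled register, so that $\cF(\cR(\bar{\sigma})) = \bar{\sigma}$, and $\cR$ exactly recovers the original state, $\cR\big(\rho_{f_R(X) E R}\big) = \rho_{X E R}$. Setting $\bar{\rho} := \cR(\bar{\sigma})$ and using that the trace-preserving map $\cR$ contracts the trace norm,
\[
  \tfrac{1}{2}\bigl\| \bar{\rho} - \rho_{X E R} \bigr\|_1
  = \tfrac{1}{2}\bigl\| \cR(\bar{\sigma}) - \cR(\rho_{f_R(X) E R}) \bigr\|_1
  \leq \tfrac{1}{2}\bigl\| \bar{\sigma} - \rho_{f_R(X) E R} \bigr\|_1 \leq \eps ,
\]
so $\bar{\rho}$ is a legitimate $\eps$-smoothing of $\rho_{X E R}$.

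It remains to chain the pieces. Because $\bar{\rho}$ lies in $\mathcal{B}^{\eps}(\rho_{X E R})$ we have $H_{\infty}^{\eps}(X|ER) \geq H_{\infty}(X|ER)_{\bar{\rho}}$; the non-smoothed monotonicity of the previous paragraph applied to $\bar{\rho}$ gives $H_{\infty}(X|ER)_{\bar{\rho}} \geq H_{\infty}(f_R(X)|ER)_{\cF(\bar{\rho})}$; and since $\cF(\bar{\rho}) = \bar{\sigma}$ was chosen optimal, $H_{\infty}(f_R(X)|ER)_{\bar{\sigma}} = H_{\infty}^{\eps}(f_R(X)|ER)$. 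Combining these with the first reduction yields $H_{\infty}^{\eps}(X|E) \geq H_{\infty}^{\eps}(X|ER) \geq H_{\infty}^{\eps}(f_R(X)|ER)$, as claimed. The step I expect to be the real obstacle is precisely the smoothing transport: the naive attempt to push the optimal smoothing state forward through $\cF$ only bounds the right-hand side from below, and it is the existence of the distribution-faithful classical inverse $\cR$, together with the contractivity of the trace distance, that lets one move the smoothing in the correct direction.
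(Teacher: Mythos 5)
Your overall architecture is fine up to the last step: the reduction $H_{\infty}^{\eps}(X|E) \geq H_{\infty}^{\eps}(X|ER)$ is correct, and the non-smoothed monotonicity $H_{\infty}(f_R(X)|ER) \leq H_{\infty}(X|ER)$ via guessing probabilities is valid for cq-states. You also correctly diagnose that the naive push-forward of the smoothing state goes the wrong way. But the device you introduce to fix this fails: the claim $\mathcal{R}\bigl(\rho_{f_R(X) E R}\bigr) = \rho_{X E R}$ is false in general. The map $\mathcal{R}$ acts only on the classical registers $(Z,R)$ and re-samples $X$ from $P_{X \mid f_r(X)=z}$ \emph{independently of $E$}; it therefore produces a state in which $X$ is conditionally independent of $E$ given $(R, f_R(X))$, which is not the case for $\rho_{X E R}$ whenever $f_r$ is non-injective and $E$ carries information about $X$ beyond $f_r(X)$. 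Concretely, take $X$ uniform on $\{0,1\}$, $E$ a classical copy of $X$, and $f$ the constant function: then $\mathcal{R}(\rho_{f(X)E})$ is the product of two uniform bits, while $\rho_{XE}$ is perfectly correlated. Since the lemma is precisely about Eve's side information being correlated with $X$, this is not a degenerate case, and your trace-norm chain then only shows that $\bar{\rho}$ is $\eps$-close to $\mathcal{R}(\rho_{f_R(X)ER})$, not to $\rho_{XER}$, so $\bar{\rho} \notin \mathcal{B}^{\eps}(\rho_{XER})$ in general and the first link of your final chain breaks.

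The paper avoids inverting $f_R$ altogether by never discarding $X$ until the end: it writes $H_{\infty}^{\eps}(X|E) = H_{\infty}^{\eps}(XR|ER)$ (additivity, using $H_{\infty}(R|R)=0$), notes that the map $(x,r) \mapsto (f_r(x),x,r)$ is an isometry on the classical data and hence leaves the smooth min-entropy invariant, giving $H_{\infty}^{\eps}(f_R(X)\,X\,R|ER)$, and only then traces out $X$ and $R$, which can only decrease the smooth min-entropy. The key structural point you are missing is that reversibility must be arranged by \emph{keeping} the input alongside the output (a coherent copy/computation), not by a stochastic inverse applied after the input has been discarded; the latter cannot restore correlations with the conditioning system. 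If you want to salvage your route, replace $\mathcal{F}$ by the register-preserving map $X R \mapsto f_R(X)\,X\,R$, argue invariance of the smooth min-entropy under this isometry, and then invoke the fact that discarding parts of the classical system $A$ in $H_{\infty}^{\eps}(A|B)$ only decreases the entropy --- which is exactly the paper's proof.
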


\begin{proof}
  The statement is proved by sequentially applying rules of the smooth
  entropy calculus.
  \begin{align*}
    H_{\infty}^\eps(X|E)
    & = H_{\infty}^\eps(X|E) + H_{\infty}(R | R) \\
    & = H_{\infty}^\eps(X R | E R) \\
    & = H_{\infty}^\eps(f_R(X) X R | E R) \\
    & \geq H_{\infty}^\eps(f_R(X) | E R) \ .
  \end{align*}
  The first equality holds because $H_{\infty}(R|R) = 0$ (there is no
  certainty about $R$ if $R$ is known), and the second is a
  consequence of the additivity of the min-entropy (Lemma~3.1.6
  of~\cite{rennerthesis}). The third equality is a simply consequence
  of the fact that the computation of the value $f_R(X)$ while keeping
  the input is a unitary operation, under which the min-entropy is
  invariant. Finally, the inequality holds because tracing out the
  classical systems $X$ and $R$ can only decrease the smooth
  min-entropy (see Lemma~3.1.9 of~\cite{rennerthesis}).
\end{proof}

An important practical consequence of this Lemma is that \textit{the symmetrization needs not be actually implemented}. Indeed, the smooth min-entropy is basically the only quantity that is relevant for the
security of the final key: then, the statement of the Lemma implies that, if the
symmetrized version of the protocol is secure, the original version is also secure.

\paragraph{Permutation symmetry.}

Lemma \ref{lem:symm} above is valid for any symmetrization. Typically, one considers permutation symmetry. This can be achieved, for instance, by randomly permuting the positions of the bits~\cite{rennerthesis} (more precisely, Alice and Bob both apply the same, randomly chosen, reordering to their bitstring). The symmetric states
can then be shown to have properties similar to those of i.i.d.~states, e.g. via the quantum de Finetti theorem~\cite{symindep}. This in turn leads to a bound of the form \eqref{eq:iid}, with a different definition of the parameter $\delta$ (cf. Theorem 6.5.1 in \cite{rennerthesis}, referring to Table 6.2 for the parameters; the corrections due to the de Finetti theorem are the terms that involve the quantities $k$ and $r$). Thus, a lower bound for security using finite resources can be computed for any discrete-variable protocol.

Such a bound turns out to be very pessimistic: this is the price to
pay for its generality\footnote{Also, it is an open question whether
  the existing de Finetti theorem provides tight estimates, or if the
  bounds can be improved.}. When considering some specific protocols,
there can be other, more efficient ways to obtain i.i.d. Specifically,
for the BB84 \cite{bb84} and the six-state protocol
\cite{benbra2,bru98,bec99}, suitable symmetries can be implemented in the
protocol itself by random but coordinated bit- and phase
flips~\cite{gotlo,KGR}. Security bounds against general attacks can be
computed by considering i.i.d.~states just because of these
symmetries, thus by-passing the need for the de Finetti theorem.

\subsection{Decrease of the smooth min-entropy by information leakage}
\label{ssleak}

An essential part of the technical security proof presented above is the following
lemma, which provides a bound on the decrease of the min-entropy by
information leakage in the error correction step. The statement shown
here is a generalization of a corresponding statement
in~\cite{rennerthesis}, which has been restricted to one-way error
correction.

\begin{lemma} \label{lem:leakage} The decrease of the smooth
  min-entropy by the leakage of information in the error correction
  step is given by
  \begin{align*}
    H_{\infty}^\eps(X | E C) \geq H_{\infty}^\eps(X | E) - \mathrm{leak} \ .
  \end{align*}
\end{lemma}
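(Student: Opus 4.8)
The plan is to bound the information leakage in two stages, separating the contribution of $H_0(C)$ from that of the conditional term $H_\infty(C \mid X^n Y^n)$, since $\leak$ is defined as their difference. The key tool throughout will be the smooth entropy calculus from \cite{rennerthesis}, in particular the chain-rule–type inequalities and the bound on how much the min-entropy can drop when a classical register of bounded size is revealed.

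First I would recall the general principle that disclosing a classical system $C$ cannot decrease the smooth min-entropy of $X$ by more than the number of bits needed to specify $C$; the natural statement is of the form
\begin{align*}
  H_{\infty}^\eps(X \mid E C) \geq H_{\infty}^\eps(X \mid E) - H_0(C) \ .
\end{align*}
This is already a bound of the desired shape but with $H_0(C)$ in place of $\leak$, and it is exactly the version proved for one-way reconciliation in \cite{rennerthesis}, where $C$ is a single message sent by Alice. The task is therefore to sharpen $H_0(C)$ down to $\leak = H_0(C) - H_\infty(C \mid X^n Y^n)$, recovering the portion of $C$ that is statistically independent of the raw key and hence carries no information that could help Eve.

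The core of the argument is to exploit that any part of $C$ which is independent of $(X^n, Y^n)$ can be regarded as locally generated randomness rather than genuine communication. Concretely, I would decompose or recoordinate $C$ so that the bits counted by $H_\infty(C \mid X^n Y^n)$ are attached to a register that Alice and Bob could have produced on their own from shared public randomness; conditioning $E$ on such a register does not reduce $H_\infty^\eps(X \mid E)$, because the register is uncorrelated with $X$ given what is already known. Formally this would be realized by inserting an auxiliary random variable, invoking additivity of the min-entropy to split off its contribution, and then applying the data-processing inequality (tracing out is a special case, Lemma 3.1.9 of \cite{rennerthesis}) to discard it, much as in the proof of Lemma~\ref{lem:symm} above. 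What remains to be charged against the min-entropy is then only $H_0(C) - H_\infty(C \mid X^n Y^n) = \leak$.

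The main obstacle I anticipate is the handling of \emph{two-way} communication, since the generalization over \cite{rennerthesis} is precisely that $C$ now summarizes an interactive, possibly multi-round, protocol (e.g.\ Cascade) in which later messages depend on earlier ones and on both parties' data. The delicate point is to verify that the chain-rule and independence arguments still apply when $C$ is an interleaved transcript rather than a single one-shot message, and in particular that $H_\infty(C \mid X^n Y^n)$ correctly captures \emph{all} the redundancy across rounds so that no correlated bit is accidentally treated as free randomness. I expect this to reduce to a careful bookkeeping of which registers are conditioned on at each step, but the conceptual subtlety—showing that interactivity does not create additional leakage beyond what $\leak$ already accounts for—is where the real work lies.
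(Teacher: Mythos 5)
Your first step (the weaker bound $H_{\infty}^\eps(X \mid E C) \geq H_{\infty}^\eps(X \mid E) - H_0(C)$) is fine and corresponds to the first chain rule in the paper's proof. The gap is in your sharpening step. You propose to ``decompose or recoordinate'' $C$ so that $H_{\infty}(C \mid X^n Y^n)$ bits of it sit in a separate register that is statistically independent of the raw key and can therefore be discarded for free. No such decomposition exists in general: $H_{\infty}(C \mid X Y)$ is a worst-case entropy of the conditional distribution of $C$, and a large value of it does not imply that $C$ splits into an independent sub-register of that size plus a correlated remainder. (For instance, $C$ may be uniform on a large set $S_{x y}$ that varies arbitrarily with $(x,y)$; every bit of $C$ is then correlated with the key, yet $H_{\infty}(C \mid X Y)$ is large.) Extracting the ``free'' randomness would require an extractor and would not reproduce the exact quantity $\leak$ without extra error terms. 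So the central step of your plan, as stated, would fail.

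The paper's proof needs no decomposition at all. It applies a second chain rule in the opposite direction, $H_{\infty}^{\eps}(X C \mid E) \geq H_{\infty}^{\eps}(X \mid E) + H_{\infty}(C \mid X E)$, which makes the subtracted term appear additively and automatically; then strong subadditivity gives $H_{\infty}(C \mid X E) \geq H_{\infty}(C \mid X Y E)$, and the Markov property $E \leftrightarrow (X,Y) \leftrightarrow C$ (which holds simply because Alice and Bob compute $C$ from their data and public randomness) gives $H_{\infty}(C \mid X Y E) = H_{\infty}(C \mid X Y)$. Note also that the issue you flag as the ``real work''---interactivity of two-way reconciliation---is in fact dispatched in one line by that Markov-chain observation; the round structure of the protocol never enters the entropy manipulations. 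You have the right target quantity and the right intuition for why independent bits should not count, but the argument you sketch for converting that intuition into an inequality is not valid, and the valid route is structurally different.
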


\begin{proof}
  \begin{align*}
    H_{\infty}^{\eps}(X| E C)
  & \geq
    H_{\infty}^{\eps}(X C | E) - H_0(C) \\
  & \geq
    H_{\infty}^{\eps}(X | E) + H_{\infty}(C | X E) - H_0(C) \\
  & \geq 
    H_{\infty}^{\eps}(X | E) + H_{\infty}(C | X Y E) - H_0(C) \\
  & =
    H_{\infty}^{\eps}(X | E) + H_{\infty}(C | X Y) - H_0(C) 
  \end{align*}
  The first two inequalities are chain rules and the third is the
  strong subadditivity for the smooth min-entropy. The last equality
  follows from the fact that $E \leftrightarrow (X,Y) \leftrightarrow
  C$ is a Markov chain, because the communication $C$ is computed by
  Alice and Bob.
\end{proof}

\subsection{Two-universal hashing} \label{sec:hashing}

As explained above, privacy amplification is usually done by
two-universal hashing. 

\begin{definition} \label{def:gentu} A set $\cF$ of functions $f$ from
  $\cX$ to $\cZ$ is called \emph{two-universal} if
  \begin{align*}
    \Pr_{f \in \cF}\bigl[f(x) = f(x')\bigr] \leq \frac{1}{|\cZ|} \ ,
  \end{align*}
  for any distinct $x, x' \in \cX$ and $f$ chosen at random from $\cF$
  according to the uniform distribution.
\end{definition}

To perform the privacy amplification step, the two parties simply have
to choose at random a function $f$ from a two-universal set $\cF$ of
functions that output strings of length $\ell$, where $\ell$ is chosen
such that it satisfies~\eqref{eq:ellcrit}. As shown below, there exist
constructions of two-universal sets $\cF$ of functions that are both
easy to describe (the description length is equal to the input length)
and that can be efficiently evaluated.

Examples of two-universal function families have first been proposed
by Carter and Wegman~\cite{CarWeg79,WegCar81}. One of the
constructions mapping $n$-bit strings to $\ell$-bit strings, for any
$\ell \leq n$, only involves addition and multiplication in the field
$\GF(2^n)$. It is defined as the family $\cF = \{f_r\}_{r \in
  \GF(2^n)}$ of functions $f_r$ that, on input $x$, output the $\ell$
least significant bits of $r \cdot x$ (where $\cdot$ denotes the
multiplication in $\GF(2^n)$), i.e.,
\begin{align*}
  \begin{array}{lcccl}
  f_{r} : \quad & \GF(2^n) & \longrightarrow & \GF(2^\ell)  \\
  & x & \longmapsto & [ r \cdot x ]_{\ell} &.
  \end{array}
\end{align*}

\section{Computing security bounds}

\subsection{Summary of the previous section}

Let us re-phrase the results obtained above in a more operational way. An experiment is characterized by the following parameters:
\begin{itemize}
\item The protocol, in particular $d$ the number of outcomes of the measurements;
\item The number of exchanged quantum signals $N$;
\item The estimates of the channel parameters; 
\item The performances of the error correction protocol, in particular $\eps_{\mathrm{EC}}$ and $f$ (recall that these are functions of the parameters);
\item The desired level of security $\varepsilon$.
\end{itemize}
We have found above the bound \eqref{eq:ellcrit} for the extractable secret key length $\ell$, which is valid for collective attacks, and also for general attacks in the case of the BB84 and the six-state protocols. By setting $r=\frac{\ell}{N}$, one gets the announced expression \eqref{raten} for the secret key rate.

The expression for $r$ is thus a function of the parameters listed above and several others, namely:
\begin{itemize}
\item $n$, $b$ and $m$, subject to the constraint $nb+m\leq N$;
\item $\eps_{\mathrm{PE}}$, $\epsb$ and $\eps_{\mathrm{PA}}$, subject to the constraint $\eps =
\eps_{\mathrm{PE}} + \epsE + \eps_{\mathrm{EC}} + \eps_{\mathrm{PA}}$.
\end{itemize}
The best value for $r$ is therefore obtained by optimizing \eqref{eq:ellcrit} over the free parameters\footnote{Note that a parameter may be free \textit{a priori} but be fixed in a given experiment. For instance, if in BB84 the choice of the basis is made passively through a 50-50 beam splitter, one has the additional constraint $m=nb$.}, for a given experiment.

In Ref.~\cite{scaren}, we have presented such an optimization for the BB84 and the six-state protocols implemented with single photons, under the restriction that $f$ is a constant and $b=1$ (one-way error correction). Here, we present the computation of the security bound with finite resources for another protocol.

\subsection{An application: ``device-independent security'' against collective attacks}
\label{ssdevindep}

In 1991, Ekert noticed that the security of QKD could be related to
the violation of Bell's inequalities \cite{eke91}. This remark
provided him with the basic intuition, but it remained purely
qualitative. Only recently, on a modified version of the Ekert
protocol \cite{aci06}, it has been possible to provide a quantitative
bound on Eve's information that depends only on the violation of a
particular Bell-type inequality \cite{aci07}. The remarkable property
of this study is that this bound is ``device-independent'': the
knowledge of (i) the dimension of the Hilbert space in which Alice's
and Bob's signals are encoded and of (ii) the details of the
measurements that are performed, is \textit{not} required. The price
to pay for such generality is that there is, as of today, no argument
to conclude to unconditional security\footnote{This is in particular true because one does not bound the dimension of the Hilbert space; so the available de Finetti theorem cannot be used. It is important to stress that the usual unconditional security bounds \textit{do} rely on the assumption that the dimension of the Hilbert space is known --- and this is actually more serious than just a technical assumption for the proofs: most protocols, like BB84 and six-state, become provably \textit{insecure} if one cannot rely on the fact that a meaningful fraction of the measurements are done on two-qubit signals.}: the bound has been proved only
for collective attacks. It is also worth stressing that, as long as the detection loophole remains open, device-independent security cannot be assessed on real setups \cite{aci07,zhao07}.

Using our approach, we are going to obtain the non-asymptotic bound for device-independent security against collective attacks. We can use \eqref{eq:ellcrit} directly. Two elements depend on the protocol and must be discussed:
\begin{itemize}
\item The relation between $n$ and $m$ depends on the measurements
  specified by the protocol (here we set $b=1$). The protocol
  specifies that Alice performs three measurements $A_0$, $A_1$ and
  $A_2$, while Bob performs two measurements $B_1$ and $B_2$. The key
  is extracted out of the events $(A_0,B_1)$. Coherence in the channel
  is checked by the Clauser-Horne-Shimony-Holt (CHSH) inequality
  \cite{chsh} using $(A_1,A_2;B_1,B_2)$, i.e. from the quantity \ba
{\cal C}&=&E(A_1B_1)+E(A_1B_2)+E(A_2B_1)-E(A_2B_2)\label{eq:chsh}
\ea where $E(A_iB_j)=\mbox{Prob}(a_i=b_j)-\mbox{Prob}(a_i\neq b_j)$ is the correlation coefficient for bits.
We suppose that Alice chooses
  $A_0$ with probability $p_{a0}$ and the other settings with equal
  probability $p_{a1}=p_{a2}=(1-p_{a0})/2$; and that Bob chooses $B_1$
  with probability $p_{b1}$ and $B_2$ with probability
  $1-p_{b1}$. Therefore \ba n=p_{a0}p_{b1}N&,&m_{ij}=\demi(1-p_{a0})p_{bj}N \ea and
  the other events are discarded.

\item In \eqref{eq:ellcrit}, only $S_\xi(X|E)\equiv\max_{\sigma_{A B E} \in \Gamma_{\eps_{\mathrm{PE}}}} S(X|E^b)_{\sigma_{X E ^b}}$ depends on the protocol, and this quantity contains only the imprecision of the parameter estimation as a finite-key effect --- indeed, the other three modifications due to the finite resources, listed in Section \ref{sscollective}, give rise to the other terms in \eqref{eq:ellcrit} that are independent of the protocol. Therefore, we only have to allow a deviation of the measured parameters by the quantity $\xi(m,d)=\sqrt{ \frac{2 \ln(1/\eps_{\mathrm{PE}}) + d \, \ln(m+1)}{m}}$ as defined in \eqref{eq:Gamma}. The asymptotic version \cite{aci07} \ba
S_{\xi=0}(X|E)&=&1-h\left(\frac{1+\sqrt{({\cal C}/2)^2-1}}{2}\right)\ea
depends only on ${\cal C}$ given in \eqref{eq:chsh}. Now, the deviation on the estimate of $E(A_iB_j)$ is $\xi(m_{ij},2)$ because a correlation coefficient can be measured by a POVM with $d=2$ outcomes (``equal bits'' and ``different bits''). The most unfavorable case being obviously the one when the true value of ${\cal C}$ is lower than the estimated one, we obtain
\ba S_{\xi}(X|E)&=&1-h\left(\frac{1+\sqrt{[({\cal C}-\xi)/2]^2-1}}{2}\right)\ea  with $\xi=\sum_{i,j=1}^2 \xi(m_{ij},2)$.
\end{itemize}
Having described the quantities that depend on the protocol, we can run the optimization of $r$ for any $N$ and for some chosen values of $\eps$, $\eps_{\EC}$, $f$ and the observed parameters (${\cal C}$ and the error rate $Q$). The result is plotted in Fig.~\ref{figdevindep}. Similarly to what observed for BB84 and six-states \cite{scaren}, no key can be extracted for $N\lesssim 10^5$, and the asymptotic value is reached only for $N\gtrsim 10^{15}$. By monitoring the parameters of the optimization, one finds also that $p_{a0}$ and $p_{b_1}$ tend to 1 in the limit $N\rightarrow\infty$, as expected.

\begin{figure}
\includegraphics[scale=0.6]{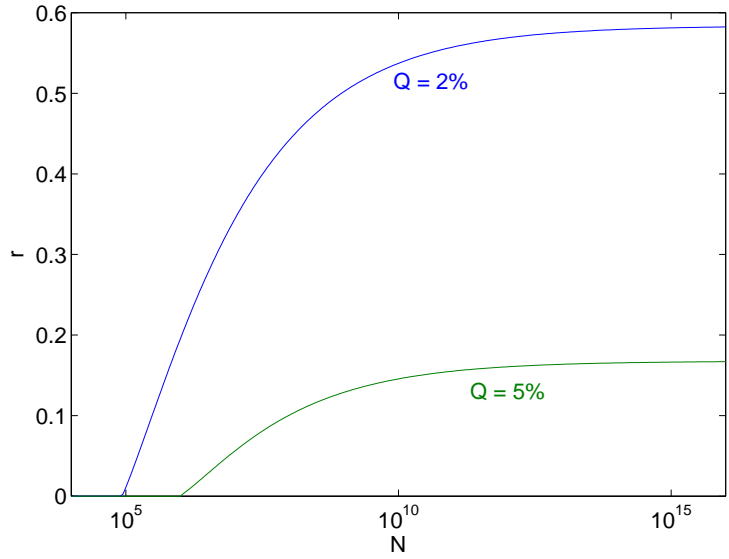}
\caption{Finite-key bound for device-independent security against collective attacks: secret key rate $r$ as a function of the number of exchanged quantum signals $N$, for two values of the observed error rate $Q$; we have assumed the relation ${\cal C}=2\sqrt{2}(1-2Q)$, which implies ${\cal C}\approx 2.715$ for $Q=2\%$ and ${\cal C}\approx 2.546$ for $Q=5\%$. We have fixed $\eps=10^{-5}$, $\eps_{\EC}=10^{-10}$ and $f=1.2$; we have supposed symmetric errors $\mbox{Prob}(a_0\neq b_1)=Q$, so that $H_0(X|Y)$ in \eqref{eq:leakec} is replaced by $h(Q)$.} \label{figdevindep}
\end{figure}

\section{Conclusion}

In this paper, we have built on our previous work on finite-key analysis \cite{scaren} and completed it with some important remarks. Lemma \ref{lem:symm} shows that the symmetrization of the data, although required to achieve security proofs, does not need to be done actively, because the min-entropy of the symmetrized data provides a bound for the min-entropy of the non-symmetrized ones. Lemma \ref{lem:leakage} extends our formalism to include two-way information reconciliation. After completing the general formalism with these Lemmas, we have applied it to derive a finite-key bound for device-independent security against collective attacks (Section \ref{ssdevindep}).

\textit{Acknowledgments.}--- This work is supported by the National
Research Foundation and Ministry of Education, Singapore.

\end{document}